%
%
%
%
%
%
%
\documentclass[%
reprint,
 amsmath,
 amssymb,
 aps,
 unsortedaddress,
 pra,
]{revtex4-2}

\usepackage{graphicx}
\usepackage{dcolumn}
\usepackage{bm}


\usepackage{amsmath,amssymb,amsthm,easybmat,verbatim}
\usepackage{color}
\usepackage{xcolor}
\usepackage{tikz}
\usepackage{algpseudocode}
\usepackage{algcompatible}

\newlength\myindent
\setlength{\myindent}{0.5cm}

\newtheorem*{prop*}{Proposition*}
\newtheorem{prop}{Proposition}

\usepackage[ruled, linesnumbered]{algorithm2e}

\usepackage{orcidlink}
\usepackage{bm}

\usepackage{physics}
\usepackage[caption=false]{subfig}
\usepackage{hyperref}

\hypersetup{
    colorlinks = true,
    citecolor = blue,
    urlcolor = blue
}

\usepackage{accents}

\newcommand{\bfly}[2]{\ket{#1}\!\!\bra{#2}}

\raggedbottom

\begin{document}

\preprint{APS/123-QED}

\title{Efficient Sparse State Preparation via Quantum Walks}

\author{Alvin Gonzales$^{1}$\orcidlink{0000-0003-1635-106X}}
\email{agonza@siu.edu}
\author{Rebekah Herrman$^{2}$\orcidlink{0000-0001-6944-4206}}
\author{Colin Campbell$^3$\orcidlink{0009-0009-4261-6878}}
\author{Igor Gaidai$^2$\orcidlink{0000-0002-3950-3356}}
\author{Ji Liu$^1$\orcidlink{0000-0002-5509-5065}}
\author{Teague Tomesh$^3$\orcidlink{0000-0003-2610-8661}}
\author{Zain H. Saleem$^1$\orcidlink{0000-0002-8182-2764}}
\affiliation{$^1$Mathematics and Computer Science Division, Argonne National Laboratory, Lemont, IL, 60439, USA}
\affiliation{$^2$Department of Industrial and Systems Engineering, University of Tennessee Knoxville, Knoxville, TN, 37996, USA}
\affiliation{$^3$Infleqtion, Chicago, IL, 60604, USA}

\date{\today}

\begin{abstract}
\noindent

Continuous-time quantum walks (CTQWs) on dynamic graphs, referred to as dynamic CTQWs, are a recently introduced universal model of computation that offers a new paradigm in which to envision quantum algorithms. In this work we develop an algorithm that converts single-edge and self-loop dynamic CTQWs to the gate model of computation. We use this mapping to introduce an efficient sparse quantum state preparation framework based on dynamic CTQWs. 
Our approach utilizes combinatorics techniques such as minimal hitting sets, minimum spanning trees, and shortest Hamiltonian paths to reduce the number of controlled gates required to prepare sparse states. We show that our framework encompasses the current state of the art ancilla free sparse state preparation method by reformulating this method as a CTQW. This CTQW-based framework offers an alternative to the uniformly controlled rotation method used by Qiskit by requiring fewer CX gates when the target state has a polynomial number of non-zero amplitudes.

\end{abstract}

\maketitle
\makeatother
\newtheorem{definition}{Definition}[section]
\newtheorem{assumption}{Assumption}[section]
\newtheorem{theorem}{Theorem}[section]
\newtheorem{lemma}{Lemma}[section]
\newtheorem{conjecture}{Conjecture}[section]
\newtheorem{property}{Property}[section]

\section{Introduction}
\label{sec:intro}

Quantum algorithms have the potential to provide speed-up over classical algorithms for some problems \cite{montanaro2015quantum, shor1999polynomial}. However, certain quantum algorithms may require non-trivial input states \cite{harrow2009quantum, larose2020robust}, which in general, are challenging to prepare and may require exponentially many CX gates \cite{Mottonen_2005TransformOfQuantStsUsingUnifContRot} in the worst case. 
Quantum state preparation (QSP) is an active area of research within the space of quantum computation \cite{vazquez2021efficient, zhang2022_qspWithOptimalCircDepthImplemAndApps, gard2020efficient}.
The resource overhead of general QSP problems for arbitrary states is known to be $\mathcal{O}(2^n)$ in the number of CX gates, where $n$ is the number of qubits \cite{Mottonen_2005TransformOfQuantStsUsingUnifContRot, Shende_2006SynthOfQuantLogCirc}. Within this limit, there are different techniques utilizing different resources, which can be useful depending on the context. For instance, if depth is more of a concern than space, Ref. \cite{Araujo_2021ADivAndConqAlgoForQSP} provides an $\mathcal{O}(n^2)$ depth divide-and-conquer method to prepare arbitrary states at the cost of needing $\mathcal{O}(2^n)$ ancillary qubits. In Ref.~\cite{Gui_2024_Spacetime-Eff}, the ancilla overhead is improved to $\mathcal{O}(n)$. For arbitrary QSP, the state of the art deterministic protocol achieves $\mathcal{O}(\frac{23}{24}2^n)$ CX scaling on even numbers of qubits \cite{Plesch_2011QSPWithUniversalGateDecomp}. The downside of this method is that it relies on the Schmidt decomposition of the $n$-qubit target state, a computationally expensive task, which has a classical run time of $\mathcal{O}(2^{3n/2})$ \cite{golub1996matrix}.

While arbitrary state preparation is exponential, there are states of practical interest that do not require exponential resources, even in the worst case \cite{cruz2019efficient, Bartschi_2019DetermPrepOfDickeSts}. Manual methods are strategies for QSP that use advanced knowledge of the target state to save resources in ways that might not be obvious when preparing arbitrary states. For example, GHZ states are highly entangled, but only require a linear number of CX gates. Similarly, the methods introduced in Ref.~\cite{bartschi2022short} require $\mathcal{O}(kn)$ CX gates to prepare $n$-qubit Dicke states of Hamming weight $k$. For machine learning applications on classical data, quantum data loaders can prepare sparse amplitude encodings of $d$-dimensional real-valued vectors on $d$-qubits in $\mathcal{O}(\log d)$ circuit depth \cite{kerenidis2024quantum}. In Ref.  \cite{Zhang_2024QuantCompMethForSolvEMProbBasedOnTheFiniteElemMeth}, Zhang et al. introduce a technique to prepare input states to the HHL algorithm \cite{harrow2009quantum} based on the finite element method equations relevant to electromagnetic problems. Zhang et al.'s algorithm achieves $\mathcal{O}(n)$ depth scaling by exploiting the symmetry and sparsity of the relevant target states; however, these useful assumptions are not guaranteed for arbitrary states.

The state of the art ancilla free method for preparing arbitrary sparse states consisting of $m\ll 2^n$ non-zero amplitude computational basis states is detailed in Ref.~\cite{Gleinig_2021AnEffAlgoForSparseQSP}, which produces circuits with $\mathcal{O}(nm)$ CX gates and runs in $\mathcal{O}(nm^2\log (m))$ time classically. This type of method is attractive for real-world applications of quantum computers because they can take advantage of sparsity while still remaining agnostic to particular details of the state. Other methods achieve this, but require ancillas \cite{de_Veras_2022_double_sparse_qsp, Mozafari_2022EffDetPrepQuantStUsingDecisionDiag}. The method using decision trees \cite{Mozafari_2022EffDetPrepQuantStUsingDecisionDiag} requires 1 ancilla qubit. The method in Ref.~\cite{Malvetti_2021QuantCircsForSparseIsom} also achieves $\mathcal{O}(nm)$ CX gates. 
However, this method is based on Householder reflections and is more complicated than the one introduced in Ref.~\cite{Gleinig_2021AnEffAlgoForSparseQSP}.

In this work we show how continuous-time quantum walks (CTQWs) on dynamic graphs can be used to create arbitrary sparse (and dense) quantum states. CTQWs on graphs are a universal model of computation \cite{childs2009universal} that excels at spatial searches \cite{farhi1998quantum, childs2004spatial} and has applications in finance \cite{scalas2006application}, coherent transport on networks \cite{mulken2011continuous}, modeling transport in geological formations \cite{berkowitz2006modeling}, and combinatorial optimization \cite{qiang2012enhanced}.
In this model, the quantum state vector is evolved under the action of $e^{-i A t}$ for some time $t$, where $A$ is an adjacency matrix of some fixed unweighted graph.
Such walks have been implemented natively on photonic chips \cite{chapman2016experimental, qiang2016efficient, tang2018experimental}.

In 2019, CTQWs on \textit{dynamic} graphs (i.e. graphs that may change as a function of time) were introduced and shown to also be universal for computation \cite{herrman2019continuous} by implementing the universal gate set H, CX, and T.

In the original dynamic graph model, isolated vertices were propagated as singletons. However a new model where isolated vertices are not propagated was introduced later in Ref.~\cite{wong2019isolated}. Simplification techniques were introduced that can reduce the length of the dynamic graph sequence if graphs in the sequence satisfy certain properties \cite{herrman2022simplifying}.
Furthermore, the authors of \cite{adisa2021implementing} found that CTQWs on at most three dynamic graphs can be used to implement the equivalent of a universal gate set.

While prior work has focused on representing quantum gate operations as CTQWs, the opposite conversion -- generating a quantum circuit from a CTQW -- is less well-studied. Since CTQWs on dynamic graphs have yet to be natively implemented on hardware, and given recent progress in the development of gate-based quantum hardware \cite{radnaev2024universal, bluvstein2024logical, wang2024fault, acharya2024quantum}, converting CTQWs to the gate model is necessary to execute them on existing quantum hardware. Additionally, developing an algorithm for such a conversion offers an alternative way of thinking about circuit design, which may result in simpler circuits, compiler optimization techniques, and new ansatzes for QAOA \cite{farhi2014quantum, zhu2022adaptive, liu2022quantum, wilkie2024qaoa}, MA-QAOA \cite{herrman2022multi, gaidai2023performance, shi2022multiangle} or VQE \cite{peruzzo2014variational, tang2021qubit, liu2022layer}, where the trainable parameters correspond to dynamic graph propagation times.

To this end, we develop an algorithm that converts CTQWs on single-edge graphs and single self-loop graphs, which can serve as a basis for arbitrary graphs, to a sequence of gates in the circuit model. 
This conversion algorithm is used as a foundation for a new deterministic arbitrary quantum state preparation framework, where the CX count of the resulting circuit is linear in the number of $m$ non-zero amplitude computational basis states that comprise the target state and the number of qubits $n$. Fig.~\ref{fig:walks-to-circuit} gives a general picture of the quantum walks QSP framework introduced in this paper. Our framework does not require ancillas.

\begin{figure}
    \centering
    \includegraphics[width=\linewidth]{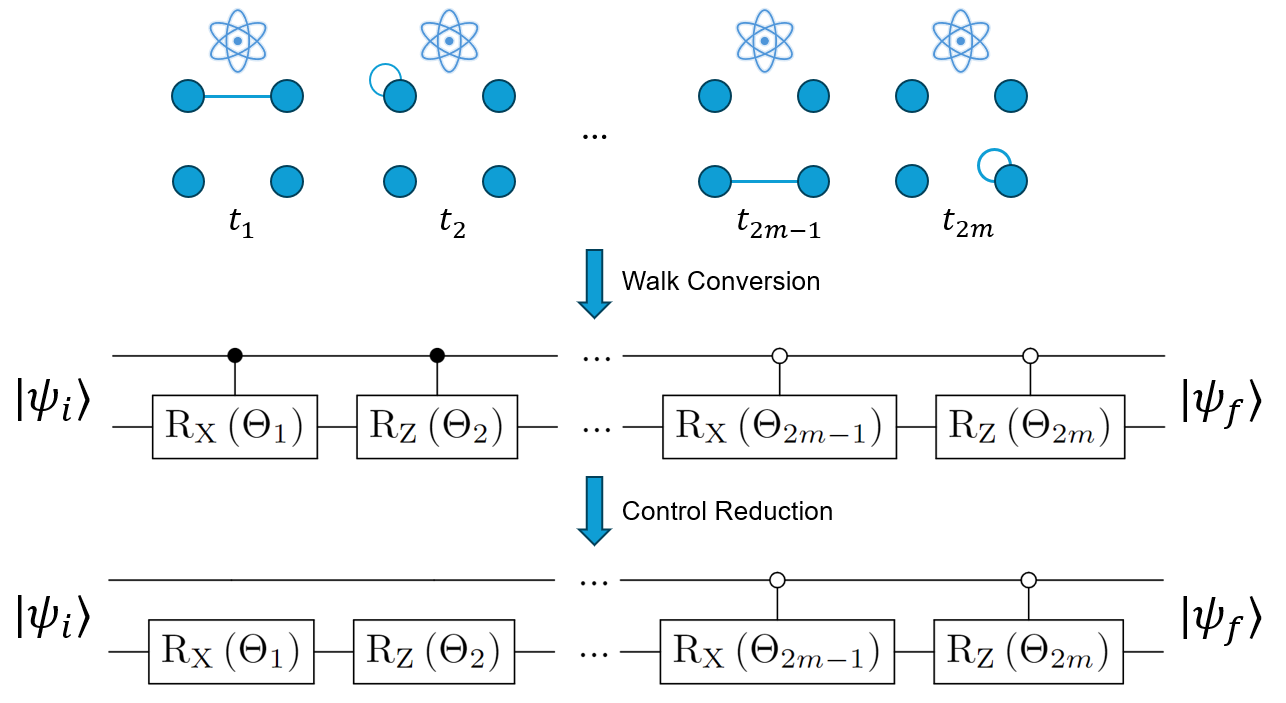}
    \caption{Alternating single-edge and self-loop quantum walks yields a framework for arbitrary QSP. We provide conversion methods to construct the gate-based circuit.}
    \label{fig:walks-to-circuit}
\end{figure}

Our approach builds on the idea of exploiting sparsity to increase the efficiency of QSP techniques by approaching the problem from the perspective of dynamic quantum walks. 
As such, our method is best suited for the asymptotically sparse states ($m = \mathcal{O}(poly(n))$), but can, in principle, also work for the asymptotically dense states ($m = \mathcal{O}(2^n)$).
The walk framework that we present here is flexible, scalable, and intuitive, building upon well-established graph-based algorithms to efficiently prepare sparse quantum states. We also show that the state of the art ancilla free merging states method of Ref.~\cite{Gleinig_2021AnEffAlgoForSparseQSP} is encompassed in our CTQW state preparation framework.


A \textit{dynamic graph} is defined as a sequence of ordered pairs $\{ (G_i, t_i) \}_{i =1}^\ell$ for some $\ell \in \mathbb{N}$ that consists of unweighted graphs $G_i$ and corresponding propagation times $t_i$. A CTQW on a dynamic graph is then a CTQW where the first walk is performed on graph $G_1$ for time $t_1$, followed by a walk on graph $G_2$ for time $t_2$, and so on until all graphs in the sequence have been walked upon. The final state $\ket{\psi_\ell}$ is then given by
\begin{equation}
\label{eq:walk_state}
    \ket{\psi_\ell}=e^{-i A_\ell t_\ell} \ldots e^{-iA_2t_2}e^{-i A_1 t_1} \ket{\psi_0}.
\end{equation}
where $A_i$ is the adjacency matrix for graph $G_i$, that is a 0-1 valued symmetric matrix.
For simplicity, we assume that each graph has $2^n$ vertices for some $n \in \mathbb{N}$, and each vertex represents a computational basis state.

As an example, consider the dynamic graph found in Fig.~\ref{fig:cx}, where the adjacency matrices are 
\begin{align*}
    A_1 &= 
    \begin{pmatrix}
        0& 0& 0& 0\\
        0& 0& 0& 0\\
        0& 0& 0& 1\\
        0& 0& 1& 0 
    \end{pmatrix}\\
    A_2 &=
    \begin{pmatrix} 
        0& 0& 0& 0\\
        0& 0& 0& 0\\
        0& 0& 1& 0\\
        0& 0& 0& 1 
    \end{pmatrix}
\end{align*}

\begin{figure}
    \includegraphics{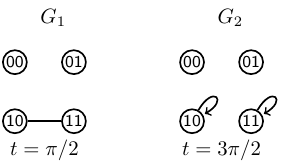}
    \caption{Dynamic CTQW implementation of CX gate.}
    \label{fig:cx}
\end{figure}

If the walker starts in the initial state $\ket{\psi_0} = (a,b,c,d)^T$, then the final state of this walker is 

\begin{equation*}
    \ket{\psi_2} = 
    \begin{pmatrix} 
        1& 0& 0& 0\\
        0& 1& 0& 0\\
        0& 0& i& 0\\
        0& 0& 0& i 
    \end{pmatrix}
    \begin{pmatrix} 
        1& 0&  0&  0\\
        0& 1&  0&  0\\
        0& 0&  0& -i\\
        0& 0& -i&  0 
    \end{pmatrix}
    \begin{pmatrix} 
        a\\
        b\\
        c\\
        d
    \end{pmatrix}
    =
    \begin{pmatrix} 
        a\\
        b\\
        d\\
        c
    \end{pmatrix}
\end{equation*}
which is equivalent to a CX gate in the circuit model.

Interestingly, CTQWs on dynamic graphs that correspond to operations in the gate model tend to have a similar form: some phase is added to particular basis states via self-loops on vertices of the graph, then they are connected by edges to allow for state transfer, followed by additional self-loops to eliminate unwanted phase.
This general alternating sequence of graphs inspires the state preparation algorithm introduced later in this work.

This paper is organized as follows. In Sec.~\ref{sec:results}, we present an algorithm for converting single-edge and single self-loop CTQWs to the circuit model and give a general overview of our deterministic state preparation method. Next, we compare various walk orders to the sparse state preparation method from Ref.~\cite{Gleinig_2021AnEffAlgoForSparseQSP} and to the uniformly controlled rotation method \cite{Mottonen_2005TransformOfQuantStsUsingUnifContRot, Shende_2006SynthOfQuantLogCirc} used by Qiskit. In both cases, we found walk orders that require fewer CX gates. Then we discuss the impact of the results and directions for future work in Sec.~\ref{sec:discussion}. Finally, combinatorics methods for selecting walking orders that reduce the number of controlled gates in the QSP circuits and additional optimizations are presented in Sec.~\ref{sec:methods}.

\section{Results}
\label{sec:results}


A standard basis for any adjacency matrix is given by the set of adjacency matrices corresponding to all single-edge walks and all single self-loop walks. We introduce the conversion methods that can construct the gate-based representations for these walks on $n$ qubits. Throughout this paper, the considered states are represented in the standard computational basis set.

First, we introduce the conversion from single-edge CTQWs to the circuit model.
The adjacency matrix for a single-edge graph connecting basis states $\ket{j}$ and $\ket{k}$ is given by
\begin{align}
    A(j,k) = \bfly{j}{k} + \bfly{k}{j}
\end{align}
and the corresponding CTQW for time $t$ is
\begin{multline}\label{eq:singleEdgeU}
    U(j,k;t) = e^{-itA(j,k)} = \cos(t)(\op{j} + \op{k}) \\- i\sin(t)(\bfly{j}{k} + \bfly{k}{j}) + \sum_{l \notin \{j, k\}} \op{l}
\end{multline}
This unitary transfers amplitude between states $\ket{j}$ and $\ket{k}$ and leaves all other states untouched. 

If $\ket{j}$ and $\ket{k}$ differ in exactly one bit at position $l$, $U(j, k; t)$ is the $(n-1)$-controlled Rx($2t$) gate
\begin{align}
  \text{Rx}(2t) = 
  \begin{pmatrix} 
    \cos(t)   & -i\sin(t) \\
    -i\sin(t) & \cos(t) 
  \end{pmatrix}
\end{align}
where the target qubit is $l$, and the remaining qubits are either 0- or 1-controls corresponding to the remaining bits of $\ket{j}$ and $\ket{k}$.

If the Hamming distance between $\ket{j}$ and $\ket{k}$ is greater than $1$, using the CRx($2t$) gate is still possible, but requires some extra overhead. First apply a sequence of CX gates where the control is any bit in which $\ket{j}$ and $\ket{k}$ differ and the targets are the other bits in which $\ket{j}$ and $\ket{k}$ differ. Then apply the CRx($2t$) gate as before and apply the reverse of the conjugating CX sequence. An example conversion of a single-edge walk to gates is shown in Fig.~\ref{fig:adj-to-circuit}. The first CX gate transforms the two basis states such that their Hamming distance becomes equal to 1. The CRx creates the superposition and the final CX restores the original basis states. 

The $(n - 1)$-controlled CRx gate is a multi-controlled special unitary, and can be decomposed in $\mathcal{O}(n)$ CX and single-qubit gates \cite{vale_2023decompOfMultiContSpecUnitarySingleQubGates}.

\begin{figure}
    \centering
    \includegraphics[width=0.42\textwidth]{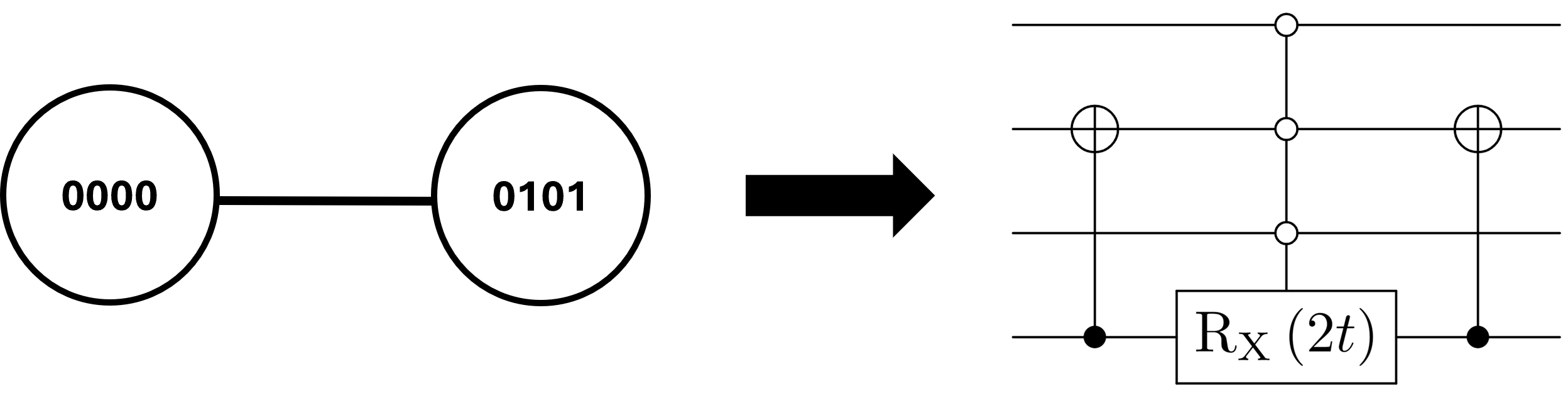}
    \caption{When a single-edge walk is between states with non-unit Hamming distance from each other, CX gates might be required in addition to the CRx gate.}
    \label{fig:adj-to-circuit}
\end{figure}

Next, we introduce the conversion from single self-loop CTQWs to the circuit model.
A single self-loop walk on basis state $\ket{j}$ is given by the adjacency matrix

\begin{align}
    A(j) = \op{j},
\end{align}
The corresponding CTQW for time $t$ is
\begin{align}\label{eq:selfLoop1}
    U(j;t) = e^{-itA(j)} = e^{-it}\op{j} + \sum_{k \neq j} \op{k},
\end{align} 
which is a diagonal matrix with a single non-unit element on the diagonal.

From this equation, one can see that the self-loop walk on a single vertex is equivalent to adding phase to exactly one computational basis state. This operation corresponds to an application of an $(n - 1)$-controlled phase gate P for time $-t$
\begin{align}
  \text{P}(t) = 
  \begin{pmatrix}
    1 & 0 \\
    0 & e^{it}
  \end{pmatrix},
\end{align}
where the target qubit can be arbitrarily chosen among the bits of $j$ that are equal to 1 (or 0, with the appropriate X conjugation), and the remaining qubits are either 0- or 1-controls corresponding to the remaining bits of $\ket{j}$.

However, the $\text{P}(t)$ gate is not a special unitary. As such, without using ancilla qubits, its decomposition requires $\mathcal{O}(n^2)$ CX gates \cite{da2022linear}. A more efficient (but also more nuanced) approach is to use an $(n - 1)$-controlled Rz($2t$) gate
\begin{align}
  \text{Rz}(2t) = 
  \begin{pmatrix}
    e^{-it} & 0 \\
    0           & e^{it}
  \end{pmatrix}
\end{align}
which is a special unitary and can be decomposed in $\mathcal{O}(n)$ CX gates \cite{vale_2023decompOfMultiContSpecUnitarySingleQubGates}.

At a first glance, an $(n - 1)$-controlled Rz gate affects two adjacent computational basis states, which makes it not equivalent to a self-loop walk in general. However, if only one of these two basis states exists in the state affected by the CRz gate, then it becomes essentially equivalent to the CP gate and can also be used to implement a self-loop walk. 

As long as the state we apply the walk to ($\ket{\psi_0}$ in Eq.~\ref{eq:walk_state}) has at least one zero-amplitude basis state $\ket{z}$, not necessarily adjacent to $\ket{j}$, one can use the same CX conjugation technique as described in the previous section for the Rx gate. This enables the interaction between $\ket{j}$ and $\ket{z}$ via the CRz gate and implements the self-loop walk on $\ket{j}$.

In the case when $\ket{\psi_0}$ is fully dense, i.e. all $2^n$ basis states have non-zero amplitudes, the above method will not work and a CP gate will have to be used instead.

As it was mentioned in the Introduction, CTQWs on dynamic graphs are known to be universal \cite{herrman2019continuous}. However, the authors of Ref. \cite{herrman2019continuous} use walks on arbitrary graphs in their proof of universality. In this section we prove that using only single-edge and single self-loop walks is sufficient to decompose an arbitrary unitary.
\begin{prop}
    An arbitrary $d \times d$ unitary can be decomposed into a series of single self-loop and single-edge CTQWs.
\end{prop}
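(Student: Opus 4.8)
The plan is to reduce the claim to the classical fact that every $d \times d$ unitary factors into a product of \emph{two-level} unitaries, that is, matrices acting as an arbitrary element of $U(2)$ on the span of two computational basis states $\{\ket{j},\ket{k}\}$ and as the identity on the orthogonal complement. Such a factorization is produced by the standard Givens/Gram--Schmidt elimination, which clears the off-diagonal entries one at a time using $\mathcal{O}(d^2)$ two-level rotations. Since both walk primitives already act as the identity outside a chosen two-dimensional subspace (see Eqs.~\eqref{eq:singleEdgeU} and \eqref{eq:selfLoop1}), it suffices to show that an arbitrary two-level unitary supported on any pair $\{\ket{j},\ket{k}\}$ is itself a finite product of single-edge and single self-loop walks on that same pair.

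Next I would record what each primitive contributes on $\mathrm{span}\{\ket{j},\ket{k}\}$. Restricted to this subspace, the single-edge walk $U(j,k;t)$ of Eq.~\eqref{eq:singleEdgeU} is precisely the special-unitary $x$-rotation $\mathrm{Rx}(2t)$, while the self-loop walks $U(j;t)$ and $U(k;s)$ are the diagonal matrices $\mathrm{diag}(e^{-it},1)$ and $\mathrm{diag}(1,e^{-is})$. Composing the two self-loops therefore realizes the arbitrary diagonal two-level unitary $\mathrm{diag}(e^{-it},e^{-is})$ for any phases. In Lie-algebraic terms, the loops supply the generators $\op{j}$ and $\op{k}$ (equivalently the identity and $Z$ on the subspace) while the edge supplies $X$.

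The core step is to assemble an arbitrary $V \in U(2)$ on the subspace from these pieces. I would invoke the $Z$-$X$-$Z$ Euler-angle decomposition, writing $V = D_1\,\mathrm{Rx}(\theta)\,D_2$ with $D_1,D_2$ diagonal unitaries and $\mathrm{Rx}(\theta)$ an $x$-rotation, the global phase of $V$ being absorbed into $D_1$. Each diagonal factor is a product of two self-loop walks and the middle factor is a single-edge walk, so $V$ is a product of at most five walks. That such a decomposition always exists is guaranteed because the generators $X$ and $Z$ contributed by the primitives satisfy $[X,Z]\propto Y$ and hence generate the full Lie algebra $\mathfrak{u}(2)$ of the subspace; the $Z$-$X$-$Z$ factorization is simply the explicit, constructive form of this fact.

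The main point to verify carefully is exactly this last one -- that edges and loops together exhaust $U(2)$ on the subspace rather than some proper subgroup -- together with the bookkeeping that each walk unitary genuinely acts trivially outside $\{\ket{j},\ket{k}\}$, so that the two-level factors compose without interfering with one another. Everything else (the outer Givens reduction and the Euler decomposition) is standard. Chaining the reductions, an arbitrary $d\times d$ unitary becomes a product of $\mathcal{O}(d^2)$ two-level unitaries, each in turn a product of at most five single-edge and single self-loop walks, which proves the claim.
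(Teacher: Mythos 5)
Your proposal is correct and follows essentially the same route as the paper: reduce to two-level unitaries, then realize each one via a $Z$--$X$--$Z$ Euler decomposition in which the single-edge walk supplies the restricted $\mathrm{Rx}$ and pairs of self-loop walks supply the diagonal factors (including the global phase). Your version merely makes the paper's argument more explicit by counting at most five walks per two-level factor and $\mathcal{O}(d^2)$ factors overall.
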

\begin{proof}
    The case of $d=1$ is trivial so we consider $d\geq 2$. An arbitrary unitary can be decomposed into a series of 2-level unitaries (unitaries that act non-trivially on two or fewer basis states, pages 189-191 of \cite{nielsen2011quantumCompAndQuantInfo}).
    
    Similarly to how an arbitrary single-qubit unitary can be decomposed as $U = W(\alpha) \text{Rz}(\beta) \text{Rx}(\gamma) \text{Rz}(\delta)$ (page 175 of \cite{nielsen2011quantumCompAndQuantInfo}), where $W=e^{i\alpha}I$, an arbitrary 2-level unitary $U$ can be decomposed as
    \begin{equation}
    \label{eq:2by2Decomp}
        U = \text{W'}(\alpha)\text{Rz'}(\beta) \text{Rx'}(\gamma) \text{Rz'}(\delta)
    \end{equation}
    for some real numbers $\alpha, \beta, \gamma, \delta$, where the W', Rz' and Rx' are 2-level unitaries 
    whose action in the corresponding 2D subspace is equivalent to their single-qubit counterparts.

    As it was shown in the introduced conversions from CTQWs to the circuit model, Eq.~\eqref{eq:singleEdgeU} and  Eq.~\eqref{eq:selfLoop1}, a single-edge CTQW corresponds to Rx', and a sequence of two single self-loop CTQWs corresponds to W' and Rz'.
    Thus, an arbitrary unitary can be decomposed into a series of single self-loop CTQWs and single-edge CTQWs.

\end{proof}

In this section, we describe how a series of CTQWs on dynamic graphs can be used to prepare an arbitrary quantum state. 


Given the task of preparing a quantum state with $m < 2^n$ non-zero amplitudes on $n$ qubits, we start by presenting a high-level overview of the framework to accomplish this task:
\begin{enumerate}
    \item Construct a tree graph connecting the non-zero amplitudes of the target state in some way. 
    \item Starting from some node, and following a graph traversal order (see Sec.~\ref{sec:methods}), perform a self-loop walk for each encountered node and a single-edge quantum walk for each encountered edge.
    \item Convert the sequence of the quantum walks to the circuit representation (which may include control reduction and CX forward- and backward-propagation, as described below).
\end{enumerate}

Many different heuristics can be used to construct the sequence of walks in step 1 or choose a particular traversal order in step 2. Initially, the system starts with all amplitude in the root state, which can be constructed from the ground $\ket{0}^{\otimes n}$ state by applying the X gates to the necessary qubits.

An arbitrary amount of amplitude (absolute value) from the source can be transferred to the destination for any pair of the connected states corresponding to each single-edge walk, and all non-zero amplitude states are connected transitively via a tree. Therefore, an arbitrary distribution of absolute values of amplitude (but not phases) can be established on the set of non-zero amplitude basis states via single-edge walks.

Each self-loop walk on a basis state $\ket{z_i}$ establishes the correct phase of the corresponding coefficient $c_i$, but does not change the absolute value of it. Therefore, performing self-loop walks on each non-zero amplitude basis state establishes the correct phases of the corresponding coefficients, without interfering with the absolute values established by the single-edge walk. Hence, an arbitrary quantum state can be prepared with a sequence of single-edge walks and self-loop walks, as described in the algorithm above.

In the context of the state preparation, CRz gates can easily be used instead of CP gates to implement the self-loop walks, since every single-edge walk transfers the amplitude to a new basis state with zero amplitude. Thus, as described in the previous section, the same basis state can be used to establish the correct phase on the source state for the single-edge walk without additional CX conjugations. However, this will not apply to the leaf nodes (degree one vertices) in the tree, which will require another zero-amplitude state to interact with.

The tree contains $\mathcal{O}(m)$ nodes and edges, therefore $\mathcal{O}(m)$ walks will be required. As described in the previous section, each walk is represented by a single multi-controlled gate (CRz or CRx) that has up to $n - 1$ controls. Each such gate can be implemented in $\mathcal{O}(n)$ CX gates \cite{vale_2023decompOfMultiContSpecUnitarySingleQubGates} and may need to be conjugated by $\mathcal{O}(n)$ additional CX gates due to the Hamming distance between the adjacent basis states. Therefore the overall complexity of the algorithm, in terms of the required number of CX gates, is $\mathcal{O}(nm)$.

\begin{figure}
    \centering
    \includegraphics[width=0.45\textwidth]{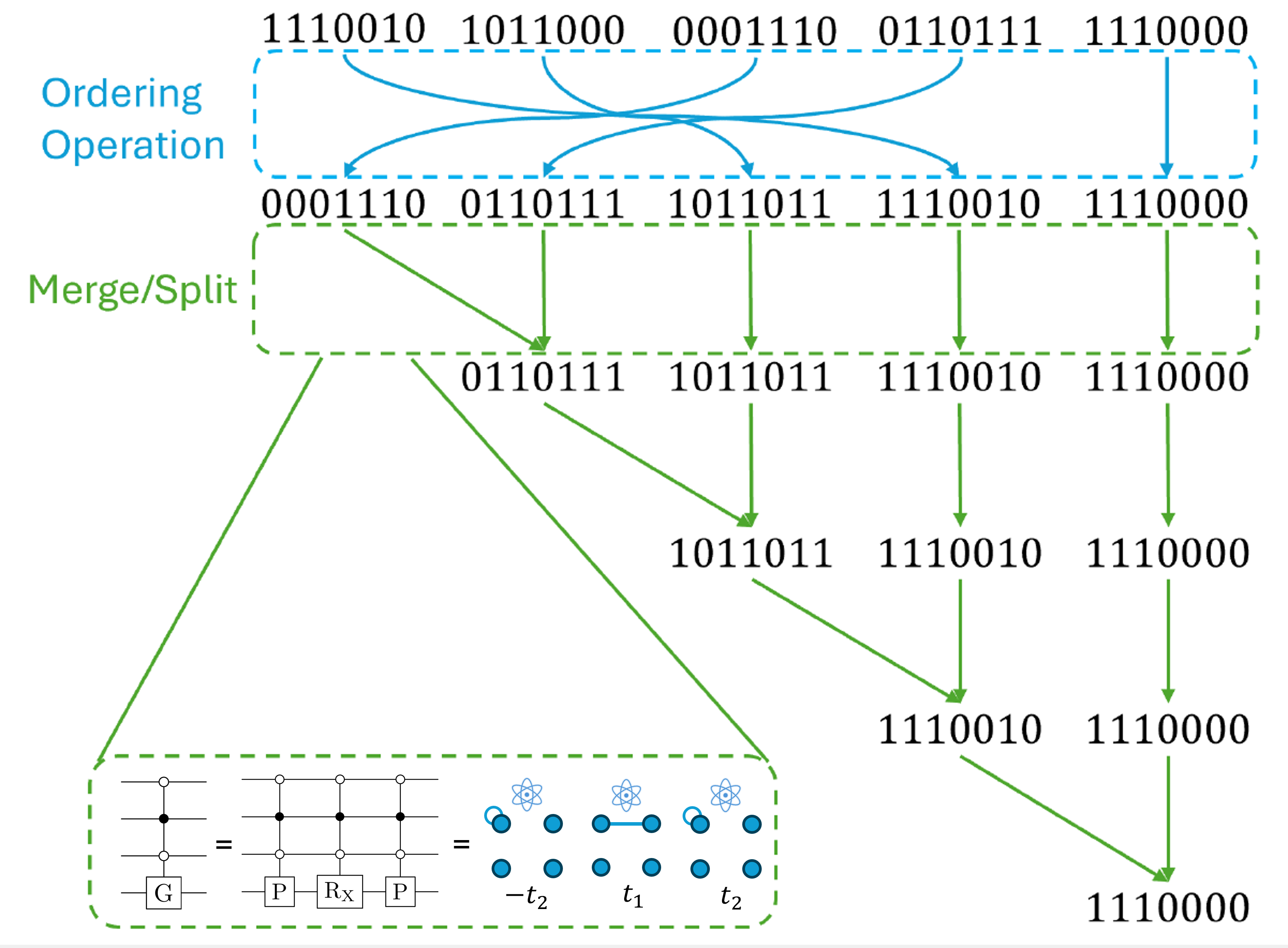}
    \caption{Given a set of bit strings over which to prepare the superposition state, this walk-based family of methods applies a heuristic for choosing an ordering of the basis states and then applies a sequence of merging (or equivalently splitting) operations to reduce (expand) an initial state to a desired endpoint. Each merge (split) operation can be interpreted as a sequence of quantum walks. Ref.~\cite{Gleinig_2021AnEffAlgoForSparseQSP} uses the controlled G Gate as the primary merging operation, but this gate is equivalent to the action of a controlled P, controlled Rx, and controlled P gate, which are the CTQWs in this work.}
    \label{fig:order-merge}
\end{figure}


Next, we discuss advantages of this framework. The CTQW approach to state preparation allows us to examine the problem from a graph-based point of view. This is a powerful perspective, as it enables us to leverage graph topology, such as connectivity patterns captured by algorithms like the minimum spanning tree, and enforce specific tree structures which can be helpful for developing new state preparation heuristics. For example, we can enforce a linear topology (see Fig.~\ref{fig:all_walks_methods}) or a star shaped topology.
Moreover, the state of the art ancilla free sparse state preparation method, referred to as Merging States (MS) method in Ref.~\cite{Gleinig_2021AnEffAlgoForSparseQSP} fits in the CTQW framework we present in this paper. From examination, we determined that MS typically corresponds to star shaped graphs.

Next, we show that the merging states method forms a tree graph representing a CTQW over the basis states of the target state with non-zero amplitude. Since MS eventually results in a single merged state, it must form a tree graph where nodes that are associated with the states being merged all feed into a single node. Thus, the MS is analogous to our splitting method described above but run in reverse. When performing the merging or splitting, the basis states must be brought within a Hamming distance of 1. After merging or splitting of two basis states, the protocol can proceed in the rotated frame or be restored to the original basis. In MS, the protocol proceeds in the rotated frame. An example of MS represented in the CTQW framework is provided in Fig.~\ref{fig:order-merge}. Second, the merging states method relies on a controlled-G gate for merging. This simply corresponds to a particular sequence of CTQWs, given by the pattern of single self-loop, single-edge, and single self-loop.

The follow section  demonstrates that, while the asymptotic scaling is equivalent, the level of sparsity in the state to be prepared can play a large role in the best heuristic to choose. Indeed, there exists the possibility of mixing and matching heuristics during the ordering process to get the best performance across many situations. We leave such an investigation to future work. 



In this section we present the comparison between the
results of numerical simulations obtained for different
walk orders of our single-edge state preparation method,
the Merging States (MS) method introduced in Ref.~\cite{Gleinig_2021AnEffAlgoForSparseQSP}, and
Qiskit’s built-in state preparation method based on uniformly
controlled rotations Ref.~\cite{Mottonen_2005TransformOfQuantStsUsingUnifContRot, Shende_2006SynthOfQuantLogCirc}. The error bars in the figures
denote the 95\% confidence interval for the estimate
of the mean.


First, we compare the performance of different walk orders presented in Sec.~\ref{sec:methods}, which is shown in Fig.~\ref{fig:all_walks_methods}. The greedy methods use the initial ordering specified in parenthesis. For the ``Greedy Combined" results, we consider both Greedy and the method in parenthesis and choose the one with the lower CX count. 

As seen from the figure, all considered walking orders have similar performance, with Random, Sorted, SHP and MST requiring more controlled gates than the others. MHS Linear and Nonlinear outperform Greedy(Sorted) on average except at 5 qubits where only MHS Linear beats greedy. As expected, Greedy(MHS) Combined performs the best on average.

\begin{figure}[h!]
    \centering
    \includegraphics[width=\linewidth]{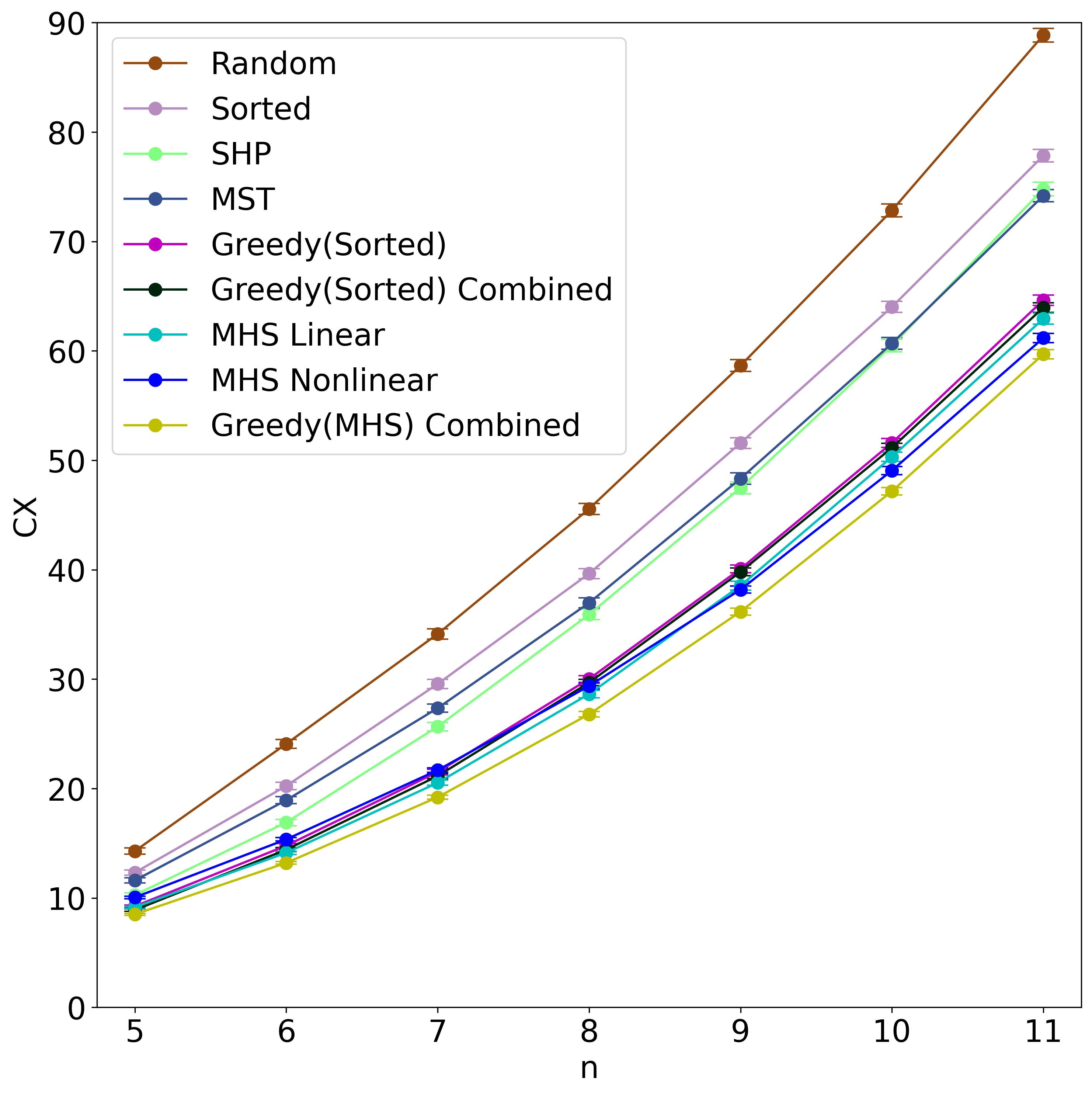}
    \caption{Comparison of walk orders with $m=n$. Greedy(MHS) (Greedy(Sorted)) uses MHS Linear (Sorted) as the initial ordering. Each data point is averaged over 1000 random states.}
    \label{fig:all_walks_methods}
\end{figure}

\begin{figure*}
    \centering
    \subfloat[$m=n$ \label{fig:comparisontoSOA_m=n}]
    {\includegraphics[width=0.5\linewidth]{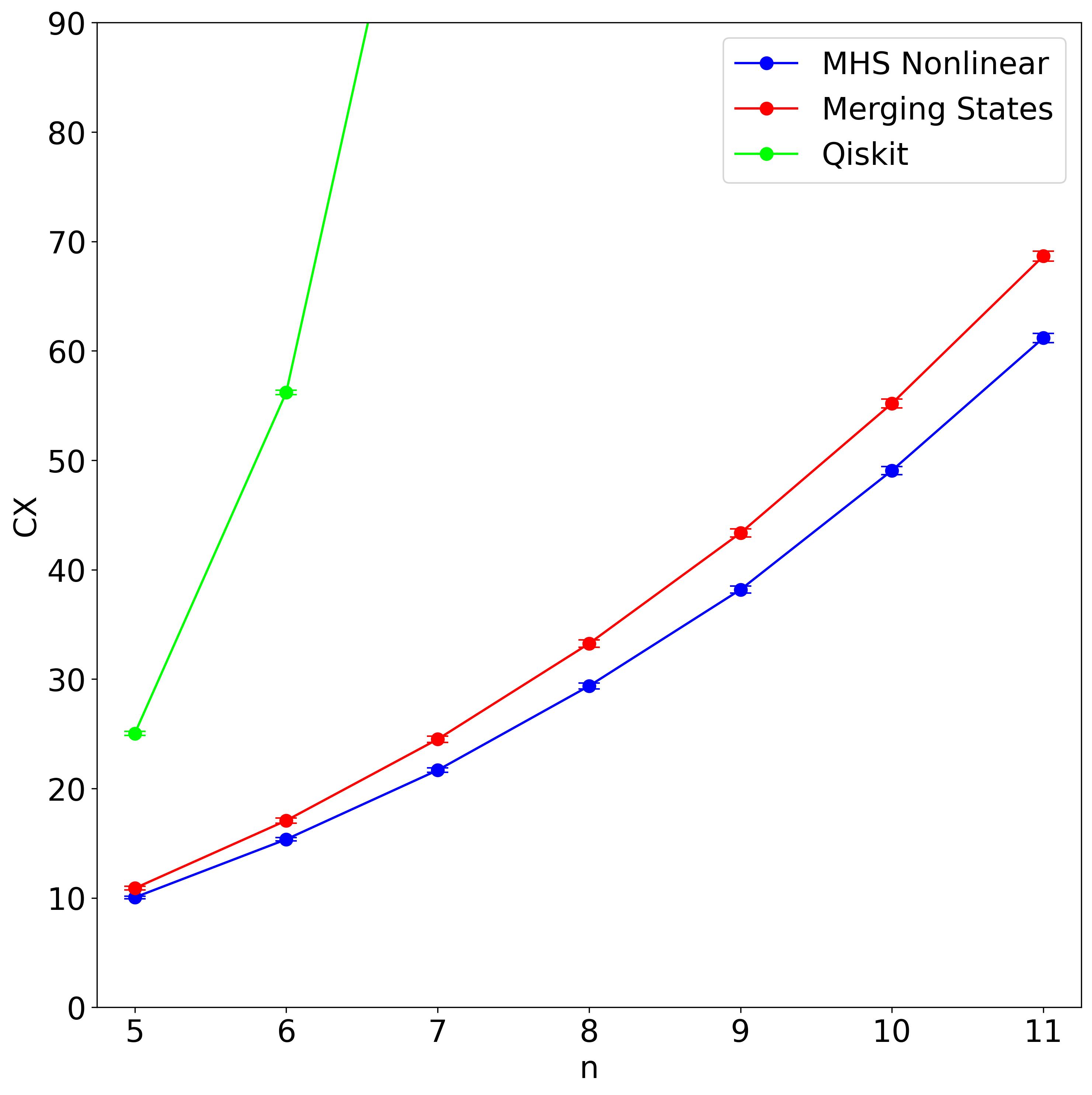}}\hfil
    \subfloat[$m=n^2$. \label{fig:comparisontoSOA_m=n^2}]{\includegraphics[width=0.5\linewidth]{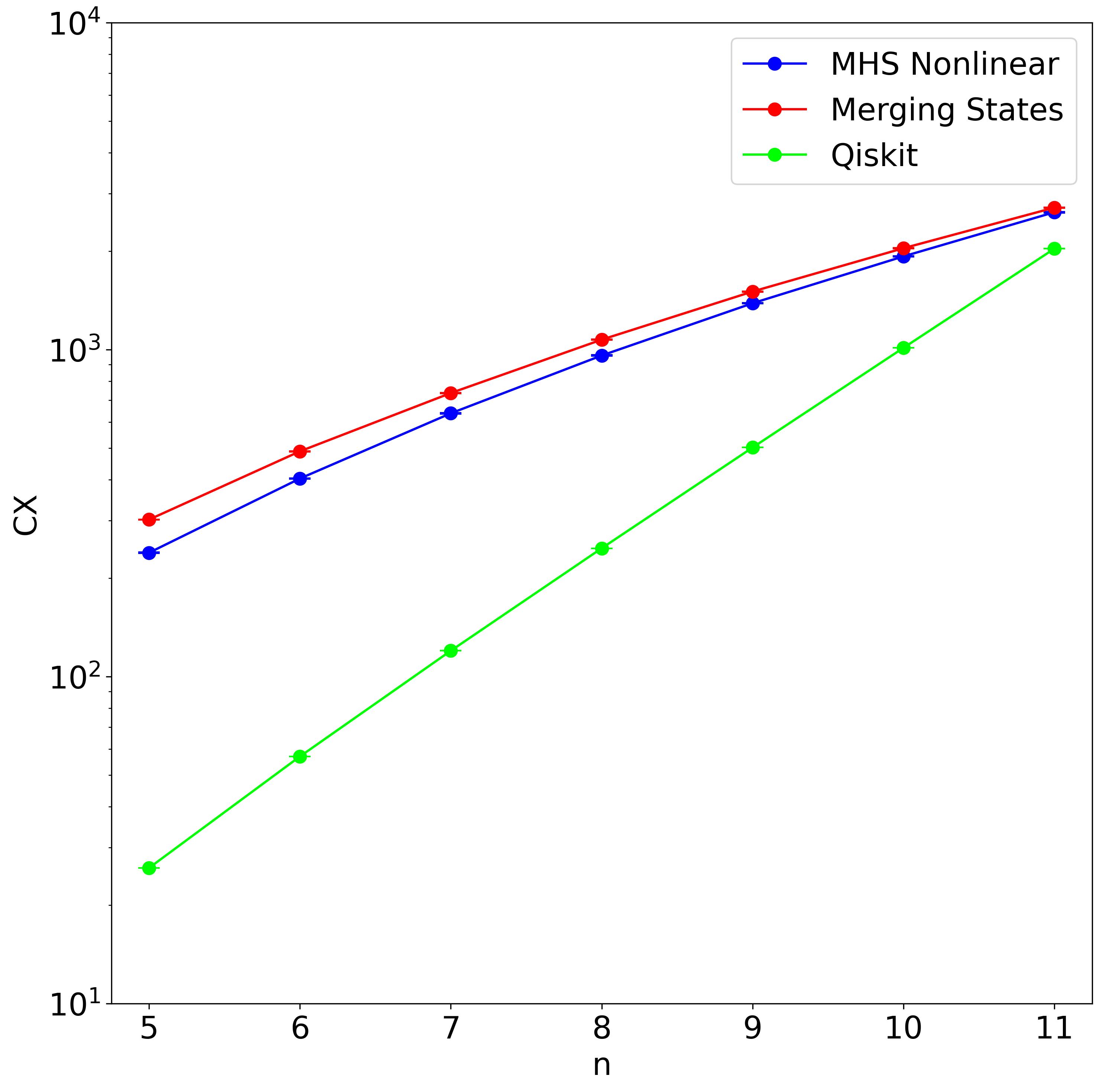}}
    \caption{CX count comparison between single-edge method with the best walk order (MHS Nonlinear) presented in this paper, state-of-the-art sparse state preparation method (Merging States) and Qiskit's built-in state preparation method for (a) $m = n$ and (b) $m = n^2$. Each data point is averaged over 1000 randomly generated states. 
    }
\end{figure*}


In Figs.~\ref{fig:comparisontoSOA_m=n} and \ref{fig:comparisontoSOA_m=n^2}, we compare the performance of the best walk order heuristic we found (MHS Nonlinear) to that of the Merging States method and Qiskit's built-in  method (which is based on Ref.~\cite{Shende_2006SynthOfQuantLogCirc}). Just as in the previous section, an average CX gate count for all methods is calculated for a set of 1000 randomly generated sparse states ($m=n$ and $m=n^2$ non-zero amplitude basis states) for every value of $n$ from 5 up to 11.

As can be seen from Figs.~\ref{fig:comparisontoSOA_m=n} and \ref{fig:comparisontoSOA_m=n^2}, Qiskit's performance scales exponentially and is very similar regardless of the value of $m$. In fact, for $m = n^2$ and $m = 2^{n - 1}$ Qiskit produces circuits with exactly the same number of CX gates regardless of the state being prepared.

In contrast to Qiskit, our method takes advantage of the sparsity of the target state and is expected to outperform Qiskit's built-in method for any asymptotically sparse state (i.e. $m = \mathcal{O}(poly(n))$) for sufficiently large values of $n$. Our method also outperforms MS for the considered cases, but our walk order heuristic seems to excel only in the linear case. From numerical examination, it appears that for $m=n$ the gap increases, but for the $m=n^2$ case a crossover happens at $n=13$. All the discussed CTQWs (including MS) have the same asymptotic scaling.

The exact code and data for these numerical experiments can be found in the linked repository (see Data Availability Sec.~\ref{sec:codeAndDataAvail}).

\section{Discussion}
\label{sec:discussion}

In this work, we developed an algorithm that can be used to convert CTQWs on dynamic graphs that consist only of self-loops or single edges to the quantum gate model. This algorithm serves as the basis for a deterministic state preparation framework that has complexity of $\mathcal{O}(nm)$ where $n$ is the number of qubits and $m$ is the number of basis states in the target state. Our framework encompasses MS \cite{Gleinig_2021AnEffAlgoForSparseQSP}.  
Furthermore, we introduce multiple methods that can reduce the CX count of the state preparation algorithm. We test various walk orders against the MS method and uniformly controlled rotation method \cite{Shende_2006SynthOfQuantLogCirc}. We find that our MHS Nonlinear heuristic excels and requires fewer CX gates than the MS method when the target state has a linear number of non-zero amplitudes. 



The single-edge framework we present here posts many interesting possible avenues for future investigation. As mentioned in Section~\ref{sec:methods}, the gate count for preparing an arbitrary state can be reduced depending on the order in which the state transfer occurs. Without control reduction, the optimal sequence of walks is given by the minimum spanning tree approach. Determining the optimal sequence of basis state transfers in the presence of control reduction is more complicated and could be an interesting direction for future research. Additionally, if a target state is comprised of basis states with symmetry, it may be possible to cleverly reduce the number of gates required to create the target state by exploiting symmetry.

\section{Methods}
\label{sec:methods}


One of the most powerful optimization methods we utilize is control reduction.
The number of CX gates necessary to implement the multi-controlled Rz and Rx gates is proportional to the number of controls on those gates. As mentioned earlier, we might need up to $n - 1$ controls in the worst case, but depending on the walk that we want to implement and the basis states that we have already visited, we might need less than that. For example, when we perform the first walk from the root, we never need any controls, since there are no other basis states that would be affected by the Rx gate.

More generally,
\begin{prop}
\label{prop:contrl_red}
    Let $n$ denote the number of qubits and $S = \{\ket{z_1}, \ket{z_2}, ..., \ket{z_k}\}$ denote the set of visited nodes. Suppose we wish to perform a single-edge walk from $\ket{z_j}\in S$ to $\ket{z_{\ell}} \notin S$ and $\ket{z_j}$ and $\ket{z_{\ell}}$ have a Hamming distance of 1. Let $b$ denote the qubit where $\ket{z_j}$ is different from $\ket{z_{\ell}}$, i.e. $z_j[b] \neq z_{\ell}[b].$  Let $D=\{\{k \, | \, z_i[k]\neq z_j[k], k\neq b \} \, |\, z_i\in S\setminus z_j\}$ denote the set of differing bits of the visited nodes with $z_j$ excluding the qubit $b$. For the gate-based representation of a single-edge walk from basis state $\ket{z_j}$ to $\ket{z_{\ell}}$, it is sufficient to control the CRz or CRx gate on any hitting set of $D$, where the values of controls are equal to the corresponding bits of $z_j$. A hitting set of $D$ is a collection of elements $h$ such that $h \cap d_i \neq \emptyset$ for all $d_i \in D$.
\end{prop}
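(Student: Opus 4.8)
The plan is to begin from the full $(n-1)$-controlled CRx representation of the single-edge walk (established in the discussion around Eq.~\eqref{eq:singleEdgeU}) and argue that dropping controls only enlarges the set of basis-state pairs the gate acts on, then show that a hitting set of $D$ is exactly the condition guaranteeing none of these extra pairs contains an already-occupied basis state. The identical reasoning will apply to the CRz used for the self-loop phase, since there too the only requirement is isolation from the occupied states.

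First I would recall that with all $n-1$ controls set to the bits of $z_j$ (target qubit $b$), the CRx gate acts nontrivially only on $\mathrm{span}\{\ket{z_j}, \ket{z_\ell}\}$, because $z_j$ and $z_\ell$ agree on every qubit except $b$. Now suppose we keep only the controls indexed by a subset $h \subseteq \{1,\dots,n\}\setminus\{b\}$, still with control values equal to the corresponding bits of $z_j$. The reduced gate acts as an Rx rotation on each pair $\{\ket{w}, \ket{w'}\}$ where $w$ agrees with $z_j$ on every qubit of $h$ and $w'$ is $w$ with qubit $b$ flipped, and is the identity elsewhere. The intended pair $\{\ket{z_j},\ket{z_\ell}\}$ is always among these, since $z_j$ trivially agrees with itself on $h$, so the desired amplitude transfer is still performed correctly.

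The key step is to determine when this reduced gate is equivalent to the single-edge walk on the current state. Because the algorithm maintains the invariant that amplitude is supported exactly on $S$ while the new node $\ket{z_\ell}$ carries zero amplitude, any affected pair consisting solely of zero-amplitude states is left unchanged by the rotation. Hence it suffices to ensure that the only occupied state lying in an affected pair is $\ket{z_j}$ itself; equivalently, that every $\ket{z_i}\in S\setminus\{\ket{z_j}\}$ disagrees with $z_j$ on at least one controlled qubit, i.e. there exists $k\in h$ with $z_i[k]\neq z_j[k]$. I would first note that each $d_i$ is nonempty: if $d_i=\emptyset$ then $z_i$ agrees with $z_j$ on all qubits except possibly $b$, forcing $z_i=z_j$ or $z_i=z_\ell$, both excluded; thus a hitting set of $D$ exists. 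The hitting-set condition $h\cap d_i\neq\emptyset$ then yields a qubit $k\in h$ (with $k\neq b$) at which $z_i$ and $z_j$ differ, so $\ket{z_i}$ falls outside every affected pair and is undisturbed (and one checks this condition is also necessary for isolating $\ket{z_j}$ by controls of this form).

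The main obstacle is conceptual rather than technical: recognizing that control reduction is inherently state-dependent. Unlike the full gate, which must be isolated against all $2^n$ basis states, the reduced gate need only be shielded from the finitely many occupied states in $S$, and the Hamming-distance question of which controlled qubit separates $z_i$ from $z_j$ is precisely the hitting-set combinatorics. Care must also be taken with the excluded target qubit $b$, which cannot serve as a separating control (hence its removal when forming the $d_i$), and with confirming that no spurious amplitude leaks into previously empty states, which follows since any unaffected occupied state has its bit-$b$ partner left untouched as well.
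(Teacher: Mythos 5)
Your proof is correct and takes essentially the same approach as the paper's: a control on any index $e \in d_i$ with value $z_j[e]$ excludes $\ket{z_i}$ from the gate's action, so a hitting set of $D$ excludes every visited state except $\ket{z_j}$. You additionally spell out two points the paper leaves implicit — that affected pairs consisting only of zero-amplitude states are harmless, and that each $d_i$ is nonempty so a hitting set exists — which strengthens rather than changes the argument.
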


\begin{proof}
    Adding a control on an arbitrary qubit $c$ to any gate makes the gate act only on the states conforming to the value of that control, i.e. $\ket{z_k}$ such that $z_k[c] = 0$ or 1, depending on the state of the control. By the definition of $d_i \in D$, controlling the CRz or CRx gate on any index $e \in d_i$ with the value of control equal to $z_j[e]$ will ensure that the gate does not act on $z_i$. Thus, controlling on any hitting set of $D$ will ensure that that the gate does not act on all $z_i \in S$, except $z_j$.
\end{proof}

When $z_j$ and $z_{\ell}$ have a Hamming distance greater than 1, we first have to update $S\rightarrow \tilde S$ with the conjugating CX gates, as described in Sec.~\ref{sec:results}. Then we apply the previously described control reduction method to $\tilde S$. Note that in this case, any of the differing qubits can be used as a target for the CRz or CRx gates, with the appropriate CX conjugation. However, the target qubit cannot be chosen as a control, therefore the choice of target affects the efficiency of control reduction. In the following sections, when a target qubit is not explicitly dictated by a particular walk order, we consider all options for the target and choose the one that results in the minimum number of controls.

In order to minimize the number of controls, one needs to minimize the size of the hitting set, i.e. solve the minimum hitting set (MHS) problem, which is known to be NP-complete. As such, no polynomial algorithm is known to solve it exactly, but a number of heuristics exist that can provide good suboptimal solutions quickly \cite{vinterbo2000minimal, gainer2017minimal}.
In practice, solving this problem for each walk drastically reduces the number of control qubits on many gates in the sequence (see Fig.~\ref{fig:control_reduction}). The MHS problems that we consider in the following sections are solved approximately.

\textbf{Example: } 
Consider the sequence of walks [[001, 111], [111, 110]]. Suppose we have implemented the first walk and are now implementing the second walk. The second walk transfers amplitude from 111 to 110. The current state is a superpostion of $\ket{001}$ and $\ket{111}$. Thus, we want the walk to affect only the amplitude of basis state 111 and leave 001 alone. Therefore, we need the controls for the Rx gate to be satisfied only by 111. Controlling the CRx gate on qubit 0 accomplishes this.

\begin{figure}
    \centering
    \includegraphics[width=0.42\textwidth]{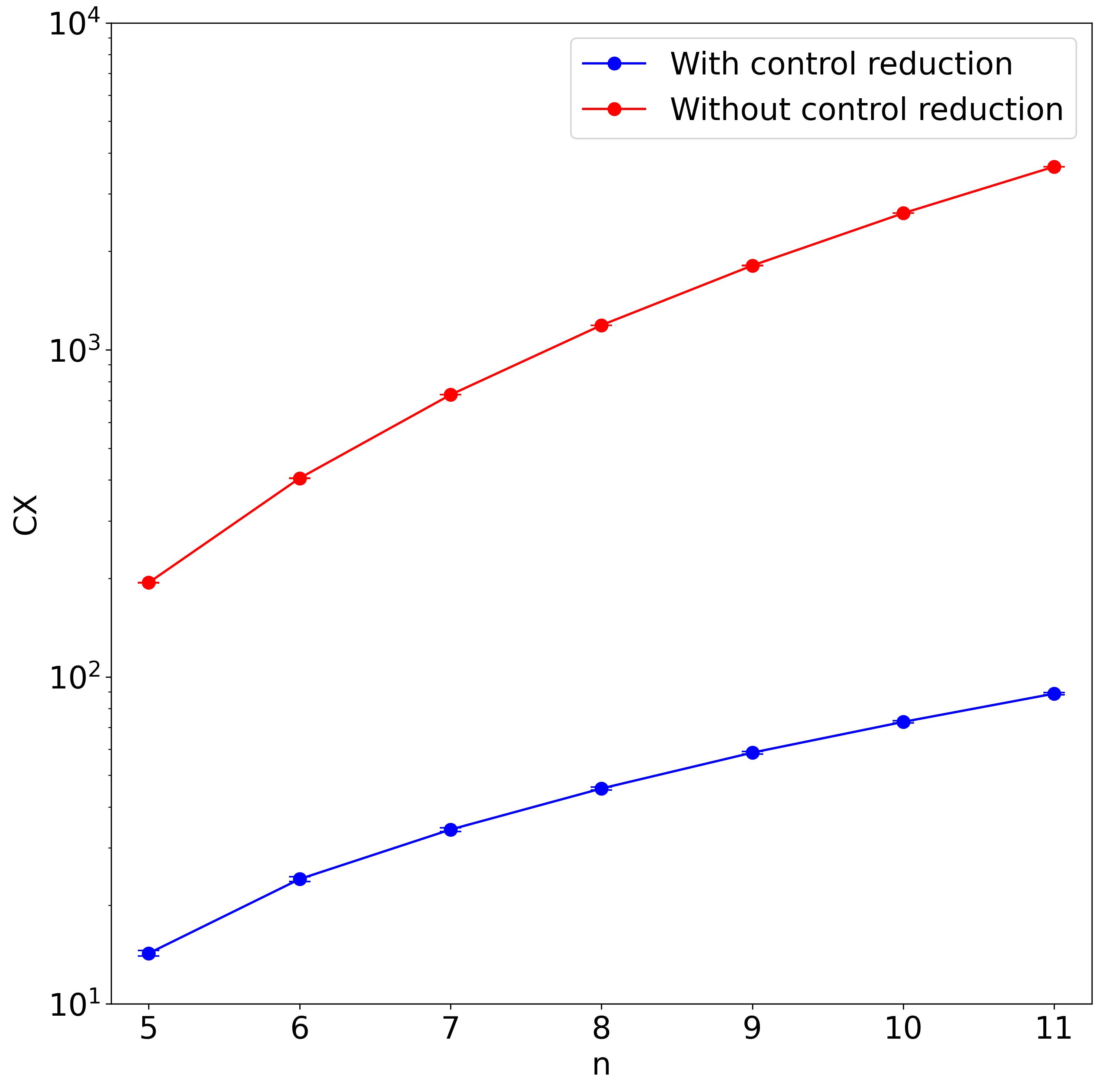}
    \caption{The average number of CX gates necessary to implement a random state with $n$ qubits with and without control reduction. Each state consist of $m = n$ non-zero amplitudes and each point is the average over 1000 random states.}
    \label{fig:control_reduction}
\end{figure}

Next, as described in Sec.~\ref{sec:results}, single-edge quantum walks between states with Hamming distance greater than one require conjugation with CX gates on both sides of the CRx gate. The CX gates on the left side bring the desired basis states to a Hamming distance of 1, where they interact via the CRx gate, and the ones on the right move the basis states back to their original positions. Rather than immediately moving the states back after the application of the CRx gate, one can instead keep working with the transformed basis and move the right branch of the conjugating CX gates to the end of the circuit, which allows many of those CX gates from different single-edge walks to cancel. This is called forward-propagation of CX gates. The backward-propagation of CX gates is the same method applied to the reverse of the circuit. The advantage of the backward-propagation is that the CX gates are moved to the beginning of the circuit. Since the initial state is the ground state, these CX gates act trivially and can be removed.

\textbf{Example: }
Consider the sequence of walks [[001, 111], [011, 110]]. The first walk [001, 111] has a Hamming distance of 2. For target qubit 1, we require CX$_{10}$ to bring the states to a Hamming distance of 1. Applying this to the basis states we get [001, 011], [111, 010]. After applying the multicontrol gate(s) for the first walk, we do not rotate the basis states back to the original frame, but instead keep track of the CX rotation which will be applied at the end of the circuit. This is an example of CX forward-propagation.

Next, we discuss leaf node considerations.
The phase on the destination node is adjusted once we consider the first walk out of that node. However, when the destination node is a terminal (leaf) node, there are no other walks that start from it and we still need to adjust its phase. 

One way to solve this issue is to use a separate CRz gate, but it adds another multi-controlled gate to the circuit and the number of controls for it is likely to be greater compared to the walk into the leaf node due to the fact that we will need to distinguish the leaf from a larger number of non-zero amplitude states and control reduction will be less efficient.

An alternative approach is to combine this last CRz gate with the CRz and CRx gates from the walk leading into the leaf, as one gate that simultaneously transfers the probability amplitude and adjusts the phase on both ends of the walk. This reduces the number of multi-controlled operations from 3 to 1, but the downside of this approach is that the resulting gate will not belong to SU(2) group. As such, the number of CX gates in its decomposition will be proportional to $\mathcal{O}(n^2)$, instead of $\mathcal{O}(n)$ necessary for the regular SU(2) gates, such as CRz or CRx. \cite{Vale_2024CircDecompOfMulticontSU(2)} Asymptotically, it makes this approach less efficient for sufficiently large values of $n$, but for our dataset it is still favorable.

Any tree constructed in the first step is, in principle, sufficient for state preparation, but not all trees result in the same CX gate count. Moreover, different walk orders on the same tree can can result in different number of CX gates as well. 
In this section we consider some of the options for tree construction and walk order.

We first describe heuristics that typically perform best on average.

\vspace{5pt}
\textit{MHS Nonlinear.}
As discussed previously, control reduction is a powerful technique that allows to significantly simplify the state preparation circuit. The efficiency of it depends on how easily a particular state can be separated from the rest of the currently existing states with non-zero amplitudes, which in turn depends on the local density of states around the state that we want to separate. The basic idea of the following walk order heuristic is that to maximize the efficiency of control reduction, we need to walk through hard-to-separate states first, before the density of states around them increases, which allows us to separate them using smaller number of controls. Since the target qubit has a significant effect on the CX count, it is determined during the walk order construction as well.

More specifically, we construct a sequence of walks in reverse time order, from last to first. Given a list of basis states, we choose the state that requires the smallest number of controls to differentiate it from the rest of the basis states by solving the Minimum Hitting Set (MHS) problem. When two states are equally easy to distinguish, we choose the state with the largest number of bits that are different from the rest. Let us call this state $z_1$. The target qubit $t$ is the MHS bit with the smallest frequency. The set differentiated by $t$ is then ordered by the easiest element to distinguish via the MHS. When two elements are equally easy to distinguish, we order them by Hamming distance from $z_1$. The easiest element is set as $z_2$. The ($z_1$, $z_2$, $t$) that requires the fewest controls is added to the path. The number of controls required is upper bounded by the size of the MHS for $z_1$ plus the size of the MHS for $z_2$. 

All the elements are then updated by the back-propagation of CX gates. $z_2$ is removed from the list and the process is repeated until every node is visited by the path. The circuit is constructed from the path of CTQWs in terms of the original basis states and target qubits and with CX backward-propagation (see Alg. \ref{alg:mhs_nonlinear} for pseudocode).

\vspace{5pt}
\textit{MHS Linear.}
This protocol follows the same procedure as MHS Nonlinear except that each step we assign $z_1$ of the previous pair as the $z_2$ for the next pair. This results in a linear graph. In contrast, in MHS Nonlinear $z_2$ of the next pair is chosen using the MHS methods.

\vspace{5pt}
\textit{Greedy Insertion.}
The linear greedy algorithm takes an ordered list of basis states and sequentially finds the best position to insert the next basis state. The classical complexity of this algorithm is $\mathcal{O}(nm^4+C)$, where $C$ is the classical complexity of the method that generates the initial ordering. Typically, the greedy insertion performs better when given an initial ordering that already performs well (see Alg. \ref{alg:greedy} for pseudocode).

The following are other orders we considered that on average typically do not outperform state-of-the-art methods.

\vspace{5pt}
\textit{Minimum Spanning Tree.}
The most straightforward option to minimize the Hamming distance between the connected basis states is to build a Minimum Spanning Tree (MST) over the complete graph of target basis states, where each edge is weighted by the Hamming distance between its endpoints.

The cost of calculating all pairwise Hamming distances is $\mathcal{O}(nm^2)$, while the MST itself can be built in $\mathcal{O}(m^2)$ with Prim's algorithm. Thus, the overall classical complexity for this method is $\mathcal{O}(nm^2)$.

In the case when the target state is fully dense, i.e. $m = 2^n$, this method can be simplified, since the tree can be automatically built without calculating all pairwise distances by branching off sequentially in each dimension of the hypercube (see Fig.~\ref{fig:example_walks}b). 
The same method can also be used for generally dense states, i.e. when $m = \mathcal{O}(2^n)$. In this case, the resulting tree will go through some zero-amplitude states, which is not optimal for the circuit, but allows to save classical computational resources.

\vspace{5pt}
\textit{Shortest Hamiltonian Path.}
The MST approach minimizes the total Hamming distance for the walks, which corresponds to the optimal walk order if control reduction is not taken into account. However, when control reduction is applied, other walk orders may be more efficient since they may allow for additional control reduction.

Therefore, instead of MST, one could find the Shortest Hamiltonian Path (SHP) in the same graph of the target basis states as used to build the MST. This option is more computationally expensive, since the cost of finding SHP exactly is $\mathcal{O}(m!)$. However, approximate SHP heuristics can find suboptimal solutions much faster \cite{gurevich1987expected}.

Similar to MST, when $m = 2^n$ the procedure can be simplified by building a Hamiltonian path throughout the whole hypercube of basis states without calculating all pairwise distances. This path is the known as the Gray code.
Examples of the quantum walks produced by SHP for the cases of sparse and dense states are shown in Fig.~\ref{fig:example_walks}.

\vspace{5pt}
\textit{Sorted order.}
Another method to build a simple linear path without solving SHP is to connect the basis states sequentially in increasing order.
This method is less optimal than SHP, but it is computationally inexpensive since the target basis states can be sorted in $\mathcal{O}(nm\log(m))$, and can provide us with a reasonably good walking order. 
For example, for a fully dense state, the average Hamming distance given by the sorted walk order is 2, which is much better than $n / 2$ achieved by a random walk order. 

A comparison of the different walking orders presented in this section for the case of $m = n$ 
is shown in Fig.~\ref{fig:all_walks_methods}. We empirically validate the quantum walk order methods and find that on the prepared states, the MHS methods and greedy methods require the fewest CX gates. Greedy(MHS) Combined is the greedy method on the initial ordering of MHS Linear and the better circuit between the two methods is used.

\begin{figure}
    \centering
    
    \subfloat[MST for a sparse state.]{\includegraphics[width=0.2\textwidth]{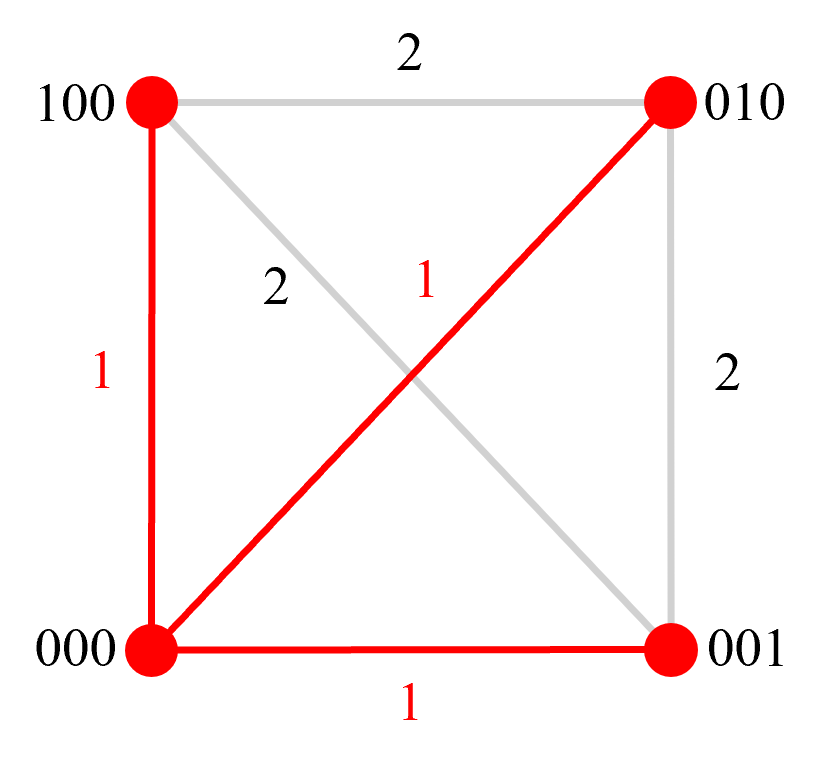}}
    \subfloat[\label{fig:mst_ex_dense} MST for a dense state.]{\includegraphics[width=0.25\textwidth]{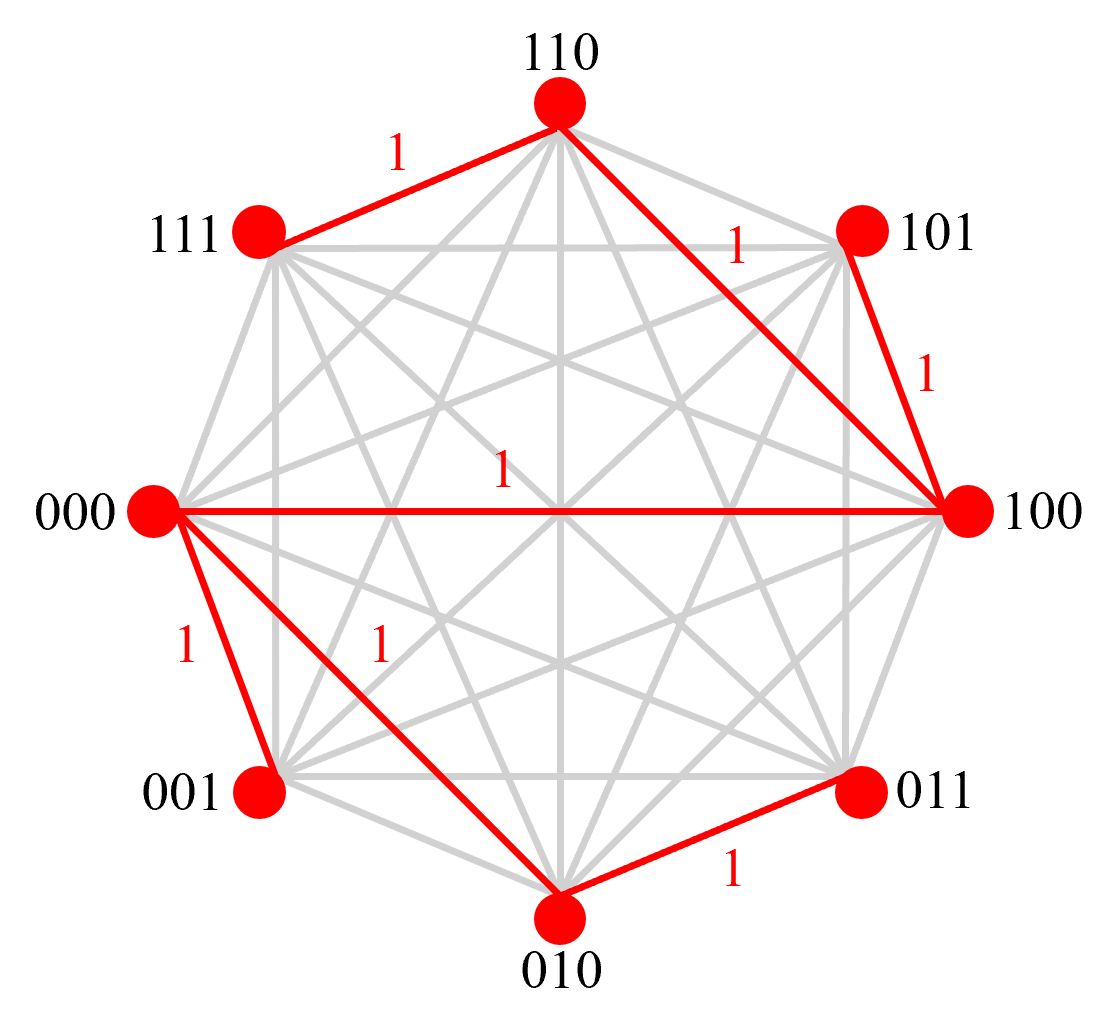}}
    
    \subfloat[SHP for a sparse state.]{\includegraphics[width=0.2\textwidth]{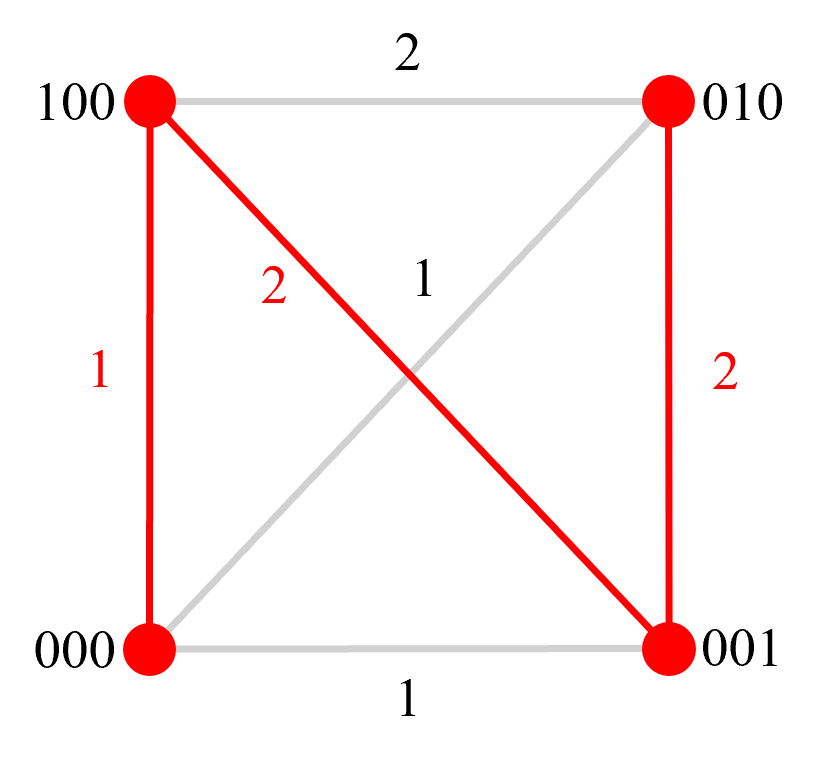}}
    \subfloat[\label{fig:shp_ex_dense}SHP for a dense state.]{\includegraphics[width=0.25\textwidth]{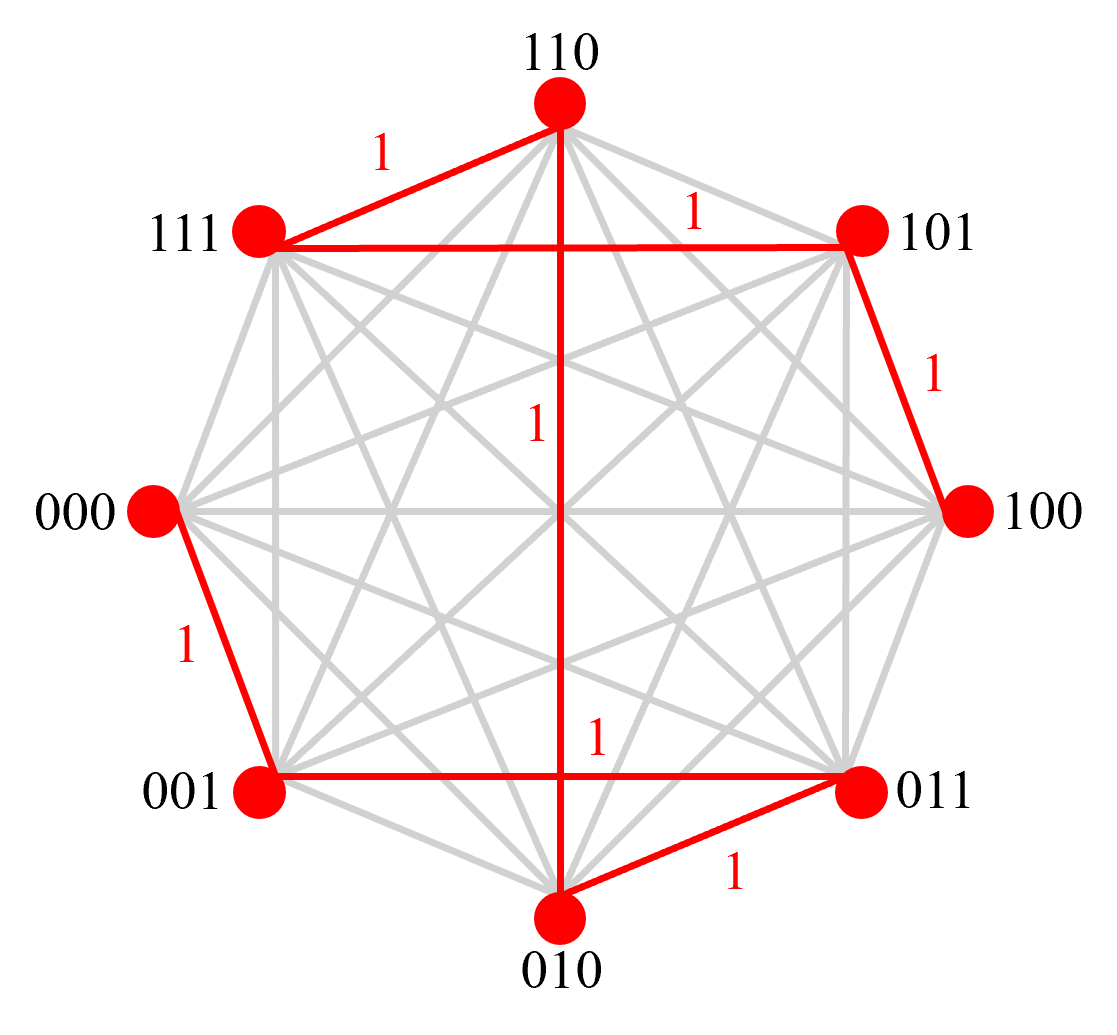}}
    
    \caption{Example walks generated by different methods for the case of 3 qubits. The weight of an edge is the Hamming distance of the two nodes. A potential walk path chosen by a given method is shown in red. In Figs.~\ref{fig:shp_ex_dense} and ~\ref{fig:mst_ex_dense}, the costs of the edges are suppressed except for the path.}
    \label{fig:example_walks}
\end{figure}


Lastly, a single edge between states $\ket{j}$ and $\ket{k}$ corresponds to a CRx gate, which introduces an imaginary phase on $\ket{k}$. However, for the purpose of preparing a state with some real-valued amplitudes, a better approach is to use a CRy gate instead, which has the same action as CRx, except it does not introduce a complex phase. CRy keeps the amplitudes real and makes it unnecessary to use CRz or CP gates when the amplitude is transferred between the basis states with real coefficients. As such, for these states, one might want to use a special walk order that goes through all real-valued basis states before moving to the complex-valued ones. For the states where all amplitudes are real, this approach can reduce the total number of CX gates in the circuit by a factor of approximately 2. We plan to implement this in our code in the future.


\begin{algorithm}\label{alg:conversion}
\caption{Conversion}
\begin{algorithmic}[1]
    \STATE Convert from CTQWs to a circuit.
    \STATE \textbf{Input:}\\
    $walks$: The set of walks.\\
    $state$: The target state.
    \STATE \textbf{Output:}\\
    The circuit that implements the quantum walks.
    \STATE
    \STATE $walks \gets$ Reverse of the walks.
    \STATE $circ \gets$ Initial empty quantum circuit.
    \FOR {$z_1$, $z_2$, $target$, $pangle$, $xangle$ in walks}
    \IF {$target$ is not specified}
    \STATE $target\gets$ \Comment{The qubit index that results in the smallest number of controls.}
    \ENDIF
    \STATE \{Apply the CX gates required to bring $z_1$ and $z_2$ to a Hamming distance of 1.\}
    
    \STATE Determine $mhs$ for $z_1, z_2, target$.
    \IF{$z_2$ is a leaf node}
    \STATE \{Apply controlled U Gate to $circ$ (corresponds to the phase, Rx, phase gate pattern described in the leaf node discussion in Sec. \ref{sec:methods}.)\}
    \ELSE
    \STATE Apply controlled SU(2) Gate to $circ$.
    \ENDIF
    \STATE Update the walks with the CX gates.
    \ENDFOR\\
    \Return inverse of $circ$.
\end{algorithmic}
\end{algorithm}

\begin{algorithm}
\label{alg:mhs_nonlinear}
\caption{MHS Nonlinear}
\begin{algorithmic}[1]
    \STATE \textbf{Input:}\\
    $basis$: A set of basis states to walk through.
    \STATE \textbf{Output:}\\
    A sequence of single-edge quantum walks [[$z_1$, $z_2$, $target$]$_i$] through the provided basis states that minimizes the number of controls needed to implement the corresponding multi-controlled operations.\\
    \STATE $path \gets$ empty.
    \STATE $mutable\_basis \gets$ copy $basis$.
    \WHILE{size of $basis>1$}

        \STATE $all\_mhs\_z_1 \gets$  \Comment{MHS for every element in $mutable\_basis$.}
        \STATE $all\_diffs\gets$ \Comment{for every element $z_i$ in $mutable\_basis$ get the subset $k_i=mutable\_basis/z_i$ and construct a list of list of the bits that differ with $z_i$.}
        \STATE $all\_targets \gets$ \Comment{For every element $mhs$ in $all\_mhs\_z_1 $ get the smallest frequency index in the corresponding $all\_diffs[mhs]$}.
        \STATE $all\_basis\_z_2\gets$ \Comment{A list of list. For every element $z_i$ in $mutable\_basis$ get the subset $k_i\subset mutable\_basis/z_i$ such that $k_i$ is \textit{only} covered by the hitting set $all\_mhs\_z_1[z_i]$ with the corresponding target $all\_targets[z_i]$.}
        \STATE $all\_mhs\_z_2 \gets$ \Comment{A list of list of list. For every element $basis_i$ in $all\_basis\_z_2$ get the MHS for each element in $basis_i$.}
        \STATE \{Append to $path$ $[z_1, z_2, target]$ that corresponds to the smallest size($all\_mhs\_z_1[z_1]$)$+$size($all\_mhs\_z_2[z_1][z_2]$). If pairs have the same size, use the one with the largest $all\_diffs[z_1]$). Note that $z_1$ and $z_2$ are in terms of the original basis.\}
        \STATE Remove $z_2$ from $basis$.
        \STATE Remove $z_2$ from $mutable\_basis$.
        \STATE Update $mutable\_basis$ with CX gates.
        
            
    \ENDWHILE\\
    \Return the inverse of $path$.
\end{algorithmic}
\end{algorithm}

\begin{algorithm}\label{alg:greedy}
\caption{Greedy Insertion}
\begin{algorithmic}[1]
    \STATE The walks considered here all form linear graphs. Greedy finds the best position to insert the next basis state into the path sequence.
    \STATE \textbf{Input:}\\
    $basis$: Basis in some order.
    \STATE \textbf{Output:}\\
    The best circuit found.\\

    \STATE $path \gets$ Empty list
    \FOR{$z$ in $basis$}
        \STATE $cxcount\gets$ None.
        \STATE $curr\_path\gets$ None.
        \FOR{$idx$ in (length($path$)+1)}
        \STATE $temp\_path \gets$ copy $path$.
        \STATE Insert $z$ in $temp\_path$ at $idx$.
        \STATE $temp\_cxcount\gets$ CX count for constructed circuit with $temp\_path$.
        \IF{$cxcount$ is None or $temp\_cxcount < cxcount$}
        \STATE $curr\_path\gets temp\_path$.
        \STATE $cxcount\gets temp\_cxcount$.
        \ENDIF
    \ENDFOR
    \STATE $path\gets curr\_path$.
    \ENDFOR\\
    \Return circuit for $path$.
\end{algorithmic}
\end{algorithm}

\section*{Code and Data Availability}
\label{sec:codeAndDataAvail}
The code and data for this research can be found at \url{https://github.com/GaidaiIgor/quantum_walks}.

\section*{Acknowledgments}
The authors would like to thank Mostafa Atallah for useful conversations regarding this work. 

AG, JL, and ZHS acknowledge DOE-145-SE-14055-CTQW-FY23. CC and TT acknowledge DE-AC02-06CH11357. IG and RH acknowledge DE-SC0024290. 
This research used resources of the Argonne Leadership Computing Facility, a U.S. Department of Energy (DOE) Office of Science user facility at Argonne National Laboratory and is based on research supported by the U.S. DOE Office of Science-Advanced Scientific Computing Research Program, under Contract No. DE-AC02-06CH11357. The funder played no role in study design, data collection, analysis and interpretation of data, or the writing of this manuscript.

\section*{Author contributions}
AG developed the CTQWs to gates conversion, formulated control reduction, wrote code, gathered data, constructed proofs, and calculated complexities. 
RH developed the algorithm that converts CTQWs on dynamic graphs to the circuit model. 
CC wrote code, developed state preparation approaches, and tested circuit optimization techniques.
IG developed the minimum spanning tree approach, CRz implementation of the self-loop walks, wrote code, collected data and prepared some figures.
JL also developed the circuit optimization for reducing control bits. 
TT helped develop state preparation approaches and evaluation methodology.
ZHS had the idea of creating a dictionary between single-edge quantum walks and quantum circuits. 

All authors wrote, read, edited, and approved the final manuscript. 

\section*{Competing interests}
All authors declare no financial or non-financial competing interests. 

\vspace{20pt}
\noindent
\framebox{\parbox{0.97\linewidth}{
The submitted manuscript has been created by UChicago Argonne, LLC, Operator of 
Argonne National Laboratory (``Argonne''). Argonne, a U.S.\ Department of 
Energy Office of Science laboratory, is operated under Contract No.\ 
DE-AC02-06CH11357. 
The U.S.\ Government retains for itself, and others acting on its behalf, a 
paid-up nonexclusive, irrevocable worldwide license in said article to 
reproduce, prepare derivative works, distribute copies to the public, and 
perform publicly and display publicly, by or on behalf of the Government.  The 
Department of Energy will provide public access to these results of federally 
sponsored research in accordance with the DOE Public Access Plan. 
http://energy.gov/downloads/doe-public-access-plan.}}


\end{document}